\newtheoremstyle{plain-boldhead}
  {\topsep}
  {\topsep}
  {\itshape}
  {}
  {\bfseries}
  {.}
  { }
  {\thmname{#1}\thmnumber{ #2}\thmnote{ (\bfseries #3)}}
\newtheoremstyle{definition-boldhead}
  {\topsep}
  {\topsep}
  {\normalfont}
  {}
  {\bfseries}
  {.}
  { }
  {\thmname{#1}\thmnumber{ #2}\thmnote{ (\bfseries #3)}}
\theoremstyle{plain-boldhead}
\newtheorem{theorem}{Theorem}
\theoremstyle{definition-boldhead}
\newtheorem{definition}{Definition}
\newcommand{\str}[1]{\textsc{#1}}
\newcommand{\var}[1]{\textit{#1}}
\newcommand{\op}[1]{\textsl{#1}}
\newcommand{\msg}[2]{\ensuremath{[\str{#1}, {#2}]}}
\newcommand{\becomes}{\ensuremath{\leftarrow}}
\newcommand{\nil}{\ensuremath{\bot}}
\newcommand{\event}[1]{\mbox{\textsl{#1}}}
\newcommand{\eventp}[2]{\mbox{\textsl{#1}(#2)}}
\newcommand{\false}{\str{False}\xspace}
\newcommand{\true}{\str{True}\xspace}
\newcommand{\CE}{\ensuremath{\mathcal{E}}\xspace}
\newcommand{\CI}{\ensuremath{\mathcal{I}}\xspace}
\newcommand{\CM}{\ensuremath{\mathcal{M}}\xspace}
\newcommand{\CO}{\ensuremath{\mathcal{O}}\xspace}
\newcommand{\CP}{\ensuremath{\mathcal{P}}\xspace}
\newcommand{\CS}{\ensuremath{\mathcal{S}}\xspace}
\newcommand{\CU}{\ensuremath{\mathcal{U}}\xspace}
\newcommand{\shortlist}{\parskip0pt\topsep0pt\partopsep0pt\parsep1pt\itemsep2pt}
\providecommand{\nondet}{non-de\-ter\-min\-istic\xspace}
\begin{document}

\title{\bf Non-determinism in Byzantine Fault-Tolerant Replication%
\footnote{A preliminary version of this work has been presented
at OPODIS 2016 (Madrid, Spain) and an extended abstract appears in 
the proceedings.}}

\author{Christian Cachin \and Simon Schubert \and Marko Vukoli\'c}

\date{IBM Research - Zurich\\
  \url{(cca|sis|mvu)@zurich.ibm.com}\\[2ex]
  \gitCommitterDate}

\maketitle

\begin{abstract}\noindent
  Service replication distributes an application over many processes for
  tolerating faults, attacks, and misbehavior among a subset of the
  processes.  With the recent interest in blockchain technologies,
  distributed execution of one logical application has become a prominent
  topic.  The established state-machine replication paradigm inherently
  requires the application to be deterministic.  This paper distinguishes
  three models for dealing with non-determinism in replicated services,
  where some processes are subject to faults and arbitrary behavior
  (so-called Byzantine faults): first, the modular case that does not
  require any changes to the potentially \nondet application (and neither
  access to its internal data); second, master-slave solutions, where ties
  are broken by a leader and the other processes validate the choices of
  the leader; and finally, applications that use cryptography and secret
  keys.  Cryptographic operations and secrets must be treated specially
  because they require strong randomness to satisfy their goals.

  The paper also introduces two new protocols.  First,
  Protocol~\emph{Sieve} uses the modular approach and filters out
  \nondet operations in an application.  It ensures that
  all correct processes produce the same outputs and that their internal
  states do not diverge.  A second protocol, called \emph{Mastercrypt},
  implements cryptographically secure randomness generation with a
  verifiable random function and is appropriate for most situations in
  which cryptographic secrets are involved.  All protocols are described in
  a generic way and do not assume a particular implementation of the
  underlying consensus primitive.
\end{abstract}

\pagestyle{plain}

\section{Introduction}

State-machine replication is an established way to enhance the resilience
of a client-server application~\cite{schnei90}.  It works by executing the
service on multiple independent components that will not exhibit correlated
failures.  We consider the approach of \emph{Byzantine fault-tolerance
  (BFT)}, where a group of \emph{processes} connected only by an unreliable
network executes an application~\cite{peshla80}.  The processes use a
protocol for \emph{consensus} or \emph{atomic broadcast} to agree on a
common sequence of operations to execute.  If all processes start from the
same initial state, if all operations that modify the state are
\emph{deterministic}, and if all processes execute the same sequence of
operations, then the states of the correct processes will remain the same.
(This is also called \emph{active} replication~\cite{cbpesc10}.)
A client executes an operation on the service by sending the operation to
all processes; it obtains the correct outcome based on comparing the
responses that it receives, for example, by a relative majority among the
answers or from a sufficiently large set of equal responses.  Tolerating
\emph{Byzantine faults} means that the clients obtain correct outputs as
long as a qualified majority of the processes is correct, even if the
faulty processes behave in arbitrary and adversarial ways.

Traditionally state-machine replication requires the application to be
deterministic.  But many applications contain implicit or explicit
non-determinism: in multi-threaded applications, the scheduler may
influence the execution, input/output operations might yield different
results across the processes, probabilistic algorithms may access a
random-number generator, and some cryptographic operations are inherently
not deterministic.  

Recently BFT replication has gained prominence because it may implement
distributed consensus for building
\emph{blockchains}~\cite{bmcnkf15,cdegjk16,Vukolic16,dccl16}.  A blockchain
provides a distributed, append-only ledger with cryptographic verifiability
and is governed by decentralized control.  It can be used to record events,
trades, or transactions immutably and permanently and forms the basis for
cryptocurrencies, such as Bitcoin or Ripple, or for running ``smart
contracts,'' as in Ethereum.
With the focus on active replication, this work aims at \emph{permissioned}
blockchains, which run among known entities~\cite{swanso15}.  In contrast,
\emph{permissionless} blockchains (including Ethereum) do not rely on
identities and use other approaches for reaching consensus, such as
proof-of-work protocols.
For practical use of blockchains, ensuring
deterministic operations is crucial since even the smallest divergence
among the outputs of different participants lets the blockchain diverge (or
``fork'').

This work presents a \emph{general treatment} of non-determinism in the
context of BFT replication and introduces a distinction among different
models to tackle the problem of non-determinism.  For example,
applications involving cryptography and secret encryption keys should be
treated differently from those that access randomness for other goals.  We
also distinguish whether the replication mechanism has access to the
application's source code and may modify it.

We also introduce two novel protocols.  The first, called \emph{Sieve},
replicates \nondet programs using in a \emph{modular} way,
where we treat the application as a black box and cannot change it.  We
target workloads that are usually deterministic, but which may occasionally
yield diverging outputs.  The protocol initially executes all operations
speculatively and then compares the outputs across the processes.  If the
protocol detects a minor divergence among a small number of processes, then
we \emph{sieve out the diverging values}; if a divergence among too many
processes occurs, we \emph{sieve out the operation} from sequence.
Furthermore, the protocol can use \emph{any} underlying consensus primitive
to agree on an ordering.
The second new protocol, \emph{Mastercrypt}, provides master-slave
replication with cryptographic security from verifiable random functions.
It addresses situations that require strong, cryptographically secure
randomness, but where the faulty processes may leak their secrets.

\subsection{Contributions}

We introduce three different models and discuss corresponding protocols for
replicating \nondet applications.

\begin{description}\shortlist
\item[Modular:] When the application itself is fixed and cannot be changed,
  then we need \emph{modular} replicated execution.  In practice this is
  often the case.  We distinguish two approaches for integrating a
  consensus protocol for ordering operations with the replicated execution
  of operations.  One can either use \emph{order-then-execute}, where the
  operations are ordered first, executed independently, and the results are
  communicated to the other processes through atomic broadcast.  This
  involves only deterministic steps and can be viewed as ``agreement on the
  input.''  Alternatively, with \emph{execute-then-order}, the processes
  execute all operations speculatively first and then ``agree on the
  output'' (of the operation).  In this case operations with diverging
  results may have to be rolled back.

  We introduce Protocol~\emph{Sieve} that uses speculative execution and
  follows the \emph{execute-then-order} approach.  As described before,
  \emph{Sieve} is intended for applications with occasional
  non-determinism.  It represents the first modular solution to replicating
  \nondet applications in a BFT system.

\item[Master-slave:] In the \emph{master-slave} model, one process is
  designated as the master or ``leader,'' makes all \nondet
  choices that come up, and imposes these on the others which act as slaves
  or ``followers.''  Because a faulty (Byzantine) master may misbehave, the
  slaves must be able to validate the selections of the master before the
  operation can be executed as determined by the master.  The master-slave
  model is related to passive replication; it
  works for most applications including probabilistic algorithms, but
  cannot be applied directly for cryptographic operations.  As a further
  complication, this model requires that the developer has access to the
  internals of the application and can modify it.

  For the master-slave model we give a detailed description of the
  well-known replication protocol, which has been used in earlier systems.

\item[Cryptographically secure:] Traditionally, randomized applications can
  be made deterministic by deriving pseudorandom bits from a secret seed,
  which is initially chosen truly randomly.  Outsiders, such as clients of
  the application, cannot distinguish this from an application that uses
  true randomness.  This approach does not work for BFT replication, where
  faulty processes might expose and leak the seed.  To solve this problem,
  we introduce a novel protocol for master-slave replication with
  cryptographic randomness, abbreviated~\emph{Mastercrypt}.  It
  lets the master select random bits with a \emph{verifiable random
    function}.  The protocol is aimed at applications that need strong,
  cryptographically secure randomness; however it does not protect against
  a faulty master that leaks the secret.  We also review the established
  approach of threshold (public-key) cryptography, where private keys are
  secret-shared among the processes and cryptographic operations are
  distributed in a fault-tolerant way over the whole group.
\end{description}
The modular Protocol~\emph{Sieve} has been developed for running
potentially \nondet smart contracts as applications on top of a
permissioned blockchain platform, built using BFT replication.  An
implementation has been made available as open source in ``Hyperledger
fabric'' \footnote{\url{https://github.com/hyperledger/fabric}}, which is part of
the Linux Foundation's Hyperledger Project.  As of November 2016, the
project has decided to adopt a different architecture
\footnote{\url{https://github.com/hyperledger/fabric/blob/master/proposals/r1/Next-Consensus-Architecture-Proposal.md}}; the
platform has been redesigned to use a master-slave approach for addressing
\nondet execution.

\subsection{Related work}

The problem of ensuring deterministic operations for replicated services is
well-known.  When considering only crash faults, many authors have
investigated methods for making services deterministic, especially for
multi-threaded, high-performance services~\cite{bresch96}.  Practical
systems routinely solve this problem today using master-slave replication,
where the master removes the ambiguity and sends deterministic updates to
the slaves.  In recent research on this topic, for instance, Kapitza et
al.~\cite{kscsd10} present an optimistic solution for making multithreaded
applications deterministic.  Their solution requires a predictor for
\nondet choices and may invoke additional communication via the
consensus module.

In the BFT model, most works consider only sequential execution of
deterministic commands, including PBFT~\cite{caslis02} and
UpRight~\cite{cklwad09}.  BASE~\cite{caroli03} and CBASE~\cite{kotdah04}
address Byzantine faults and adopt the master-slave model for handling
non-determinism, focusing on being generic (BASE) and on achieving high
throughput (CBASE), respectively.  These systems involve changes to the
application code and sometimes also need preprocessing steps for
operations.

Fault-tolerant execution on multi-core servers poses a new challenge, even
for deterministic applications, because thread-level parallelism may
introduce unpredictable differences between processes.  Eve~\cite{kwqcad12}
heuristically identifies groups of non-interfering operations and executes
each group in parallel.  Afterwards it compares the outputs, may roll back
operations that lead to diverging states, or could transfer an agreed-on
result state to diverging processes.  Eve resembles Protocol~\emph{Sieve}
in this sense, but lacks modularity.

For the same domain of scalable services running on multi-cores,
Rex~\cite{ghyzzz14} uses the master-slave model, where the master executes
the operations first and records its \nondet choices.  The slaves replay
these operations and use a consensus primitive to agree on a consistent
outcome.  Rex only tolerates crashes, but does not address the BFT model.

Fault-tolerant replication involving cryptographic secrets and distributed
cryptography has been pioneered by Reiter and Birman~\cite{reibir94}.  Many
other works followed, especially protocols using threshold cryptography; an
early overview of solutions in this space was given by
Cachin~\cite{cachin01}.

In current work Duan and Zhang~\cite{haizha16} discuss how the master-slave
approach can handle randomized operations in BFT replication, where
execution is separated from agreement in order to protect the privacy of
the data and computation.

\subsection{Organization}
\label{subsec:org}

The remainder of this paper starts with Section~\ref{sec:def}, containing
background information and formal definitions of broadcast, replication,
and atomic broadcast (i.e., consensus).  The following sections contain the
discussion and protocols for the three models: the modular solution
(Section~\ref{sec:modular}), the master-slave protocol
(Section~\ref{sec:master}), and replication methods for applications
demanding cryptographic security (Section~\ref{sec:secure}).

\section{Definitions}
\label{sec:def}

\subsection{System model}

We consider a distributed system of \emph{processes} that
communicate with each other and provide a common \emph{service} in a
fault-tolerant way.  Using the paradigm of service
replication~\cite{schnei90}, requests to the service are broadcast among
the processes, such that the processes execute all requests in the same
order.  The clients accessing the service are not modeled here.  We denote
the set of processes by \CP and let $n = |\CP|$.  A process may be
\emph{faulty}, by crashing or by exhibiting \emph{Byzantine faults}; the
latter means they may deviate arbitrarily from their specification.
Non-faulty processes are called \emph{correct}.  Up to $f$ processes may be
faulty and we assume that~$n > 3f$.  The setup is also called a
\emph{Byzantine fault-tolerant (BFT) service replication system} or simply
a \emph{BFT system}.

We present protocols in a modular way using an event-based
notation~\cite{CachinGR11}.  A process is specified through its
\emph{interface}, containing the events that it exposes to other processes,
and through a set of \emph{properties}, which define its behavior.  A
process may react to a received event by doing computation and triggering
further events.
The events of a process interface consist of \emph{input events}, which the
process receives from other processes, typically to invoke its services,
and \emph{output events}, through which the process delivers information or
signals a condition to another process.

Every two processes can \emph{send} messages to each other using an
authenticated point-to-point communication primitive.  When a message
arrives, the receiver learns also which process has sent the message.  The
primitive guarantees \emph{message integrity}, i.e., when a message~$m$ is
received by a correct process with indicated sender~$p_s$, and $p_s$ is
correct, then $p_s$ previously sent~$m$.  Authenticated communication can
be implemented easily from an insecure communication channel by using a
mes\-sage-authen\-ti\-ca\-tion code (MAC)~\cite{MenezesOV97}, a symmetric
cryptographic primitive that relies on a secret key shared by every pair of
processes.  These keys have been distributed by a trusted entity
beforehand.

The system is \emph{partially synchronous}~\cite{dwlyst88} in the sense
that there is no a priori bound on message delays and the processes have no
synchronized clocks, as in an asynchronous system.  However, there is a
time (not known to the processes) after which the system is \emph{stable}
in the sense that message delays and processing times are bounded.  In
other words, the system is \emph{eventually synchronous}.  This model
represents a broadly accepted network model and covers a wide range of
real-world situations.

\subsection{Broadcast and state-machine replication}
\label{subsec:broadcast}

Suppose $n$ processes participate in a broadcast primitive.  Every process
may \emph{broadcast} a request or message~$m$ to the others.  The
implementation generates events to output the requests when they have been
agreed; we say that a request~$m$ is \emph{delivered} through this.  Atomic
broadcast also solves the \emph{consensus}
problem~\cite{hadtou93,CachinGR11}.  We use a variant that delivers only
messages satisfying a given \emph{external validity}
condition~\cite{ckps01}.

\begin{definition}[Byzantine atomic broadcast with external validity]
  \label{def:abv}
  A \emph{Byzantine atomic broadcast with external validity} (\emph{abv})
  is defined with the help of a validation predicate~$V()$ and in terms of
  these events:
  \begin{description}\shortlist
  \item[Input event:] \eventp{abv-broadcast}{$m$}: Broadcasts a
    message $m$ to all processes.
  \item[Output event:] \eventp{abv-deliver}{$p, m$}: Delivers a
    message $m$ broadcast by process $p$.
  \end{description}

  The deterministic predicate~$V()$ validates messages.  It can be computed
  locally by every process.  It ensures that a correct process only
  delivers messages that satisfy~$V()$.  More precisely, $V()$ must
  guarantee that when two correct processes~$p$ and $q$ have both delivered
  the same sequence of messages up to some point, then $p$ obtains $V(m) =
  \true$ for any message~$m$ if and only if $q$ also determines that~$V(m)
  = \true$.

  With this validity mechanism, the broadcast satisfies:
  \begin{description}\shortlist
  \item[Validity:] If a correct process~$p$ broadcasts a message~$m$, then
    $p$ eventually delivers~$m$.
  \item[External validity:] When a correct process delivers some
    message~$m$, then $V(m) = \true$.
  \item[No duplication:] No correct process delivers the same message more
    than once.
  \item[Integrity:] If some correct process delivers a message~$m$ with
    sender~$p$ and process~$p$ is correct, then $m$ was previously
    broadcast by~$p$.
  \item[Agreement:] If a message~$m$ is delivered by some correct process,
    then $m$ is eventually delivered by every correct process.
  \item[Total order:] Let $m_1$ and $m_2$ be any two messages and suppose
    $p$ and $q$ are any two correct processes that deliver $m_1$ and $m_2$.
    If $p$ delivers $m_1$ before $m_2$, then $q$ delivers $m_1$ before
    $m_2$.
  \end{description}
\end{definition}

In practice it may occur that not all processes agree in the above sense on
the validity of a message.  For instance, some correct process
may conclude $V(m) = \true$ while others find that~$V(m) = \false$.  For this
case it is useful to reason with the following relaxation:
\begin{description}\shortlist
\item[Weak external validity:] When a correct process delivers some
  message~$m$, then at least one correct process has determined that $V(m)
  = \true$ at some time between when $m$ was broadcast and when it was
  delivered.
\end{description}
Every protocol for Byzantine atomic broadcast with external validity of
which we are aware either ensures this weaker notion or can easily be
changed to satisfy it.

Atomic broadcast is the main tool to implement state-machine replication
(SMR), which executes a service on multiple processes for tolerating
process faults.  Throughout this work we assume that many operation
requests are generated concurrently by all processes; in other words, there
is request contention.

A \emph{state machine} consists of variables and operations that transform
its state and may produce some output. Traditionally, operations are
\emph{deterministic} and the outputs of the state machine are solely
determined by the initial state and by the sequence of operations that it
has executed.

The state machine \emph{functionality} is defined by~$\op{execute}()$, a
function that takes a \emph{state} $s \in \CS$, initially~$s_0$, and
operation~$o \in \CO$ as input, and outputs a successor state~$s'$ and a
\emph{response} or \emph{output value}~$r$:
\[
  \op{execute}(s, o) \ \to \ (s', r),
\]

A \emph{replicated state machine} can be characterized as in
Definition~\ref{def:rsm}.  Basically, its interface presents two events:
first, an input event \eventp{rsm-execute}{\var{operation}} that a process
uses to invoke the execution of an operation~$o$ of the state machine; and
second, an output event \eventp{rsm-output}{$o, s, r$}, which is produced
by the state machine.  The output indicates the operation has been executed
and carries the resulting state~$s$ and response~$r$.  We assume here that
an operation~$o$ includes both the name of the operation to be executed and
any relevant parameters.

\begin{definition}\label{def:rsm}
  A \emph{replicated state machine (rsm)} for a functionality
  $\op{execute}()$ and initial state~$s_0$ is defined by these events:
  \begin{description}\shortlist
  \item[Input event:] \eventp{rsm-execute}{$o$}: Requests that the state
    machine executes the operation~$o$.
  \item[Output event:] \eventp{rsm-output}{$o, s, r$}: Indicates that the
    state machine has executed an operation~$o$, resulting in new
    state~$s$, and producing response~$r$.
  \end{description}
  \noindent
  It also satisfies these properties:
  \begin{description}\shortlist
  \item[Agreement:] The sequences of executed operations and corresponding
    outputs are the same for all correct processes.
  \item[Correctness:] When a correct process has executed a sequence of
    operations~$o_1, \dots, o_k$, then the sequences of output states~$s_1,
    \dots, s_k$ and responses~$r_1, \dots, r_k$ satisfies for $i=1,\dots,
    k$,
    \[
      (s_i, r_i) \ = \ \op{execute}(s_{i-1}, o_i)
    \]
  \item[Termination:] If a correct process executes a operation, then the
    operation eventually generates an output.
  \end{description}
\end{definition}

The standard implementation of a replicated state machine relies on an
atomic broadcast protocol to disseminate the requests to all
processes~\cite{schnei90,hadtou93}.  Every process starts from the same
initial state and executes all operations in the order in which they are
delivered.  If all operations are \emph{deterministic} the states of the
correct processes never diverge.

\subsection{Leader election}
\label{subsec:leaderelection}

Implementations of atomic broadcast need to make some
synchrony assumptions or employ randomization~\cite{filypa85}.  A very weak
timing assumption that is also available in many practical implementations
is an \emph{eventual leader-detector oracle}~\cite{chatou96,hadtou93}.

We define an eventual leader-detector primitive, denoted~$\Omega$, for a
system with Byzantine processes.  It informs the processes about one
correct process that can serve as a leader, so that the protocol can
progress.  When faults are limited to crashes, such a leader detector can be
implemented from a failure detector~\cite{chatou96}, a primitive that, in
practice, exploits timeouts and low-level point-to-point messages to
determine whether a remote process is alive or has crashed.

With processes acting in arbitrary ways, though, one cannot rely on the
timeliness of simple responses for detecting Byzantine faults.  One needs
another way to determine remotely whether a process is faulty or performs
correctly as a leader.  Detecting misbehavior in this model depends
inherently on the specific protocol being executed~\cite{dggs99}.  We use
the approach of ``trust, but verify,'' where the processes monitor the
leader for correct behavior.  More precisely, a leader is chosen
arbitrarily, but ensuring a fair distribution among all processes (in fact,
it is only needed that a correct process is chosen at least with constant
probability on average, over all leader changes).  Once elected, the chosen
leader process gets a chance to perform well.  The other processes monitor
its actions.  Should the leader not have achieved the desired goal after
some time, they complain against it, and initiate a switch to a new leader.

Hence we assume that the leader should act according to the application and
within some time bounds. If the leader performs wrongly or exceeds the
allocated time before reaching this goal, then other processes detect this
and report it as a failure to the leader detector by filing a complaint.
In an asynchronous system with eventual synchrony as considered here, every
process always behaves according to the specification and eventually all
remote processes also observe this; if such correct behavior cannot be
observed from a process, then the process must be faulty.

This notion of ``performance'' depends on the specific algorithm
executed by the processes, which relies on the output from the
leader-detection module.  Therefore, eventual leader election with
Byzantine processes is not an isolated low-level abstraction, as with
crash-stop processes, but requires some input from the higher-level
algorithm.  The \eventp{$\Omega$-complain}{$p$} event allows to express
this.  Every process may \emph{complain} against the current leader~$p$ by
triggering this event.

\begin{definition}[Byzantine leader detector]
  \label{def:bld}
  A \emph{Byzantine leader detector} ($\Omega$) is defined with these
  events:
  \begin{description}\shortlist
  \item[Output event:] \eventp{$\Omega$-trust}{$p$}: Indicates that
    process~$p$ is trusted to be leader.
  \item[Input event:] \eventp{$\Omega$-complain}{$p$}: Expresses a
    complaint about the performance of leader process~$p$.
  \end{description}

  The primitive satisfies the following properties:
  \begin{description}\shortlist
  \item[Eventual accuracy:] There is a time after which every correct
    process trusts some correct process.
  \item[Eventual succession:] If more than $f$ correct processes that trust
    some process~$p$ complain about~$p$, then every correct process
    eventually trusts a different process than~$p$.
  \item[Coup resistance:] A correct process~$q$ does not trust a new leader
    unless at least one correct process has complained against the leader
    which~$q$ trusted before.
  \item[Eventual agreement:] There is a time after which no two correct
    processes trust different processes.
  \end{description}
\end{definition}

It is possible to lift the output from the Byzantine leader detector to an
\emph{epoch-change} primitive, which outputs
not only the identity of a leader but also an increasing \emph{epoch
  number}.  This abstraction divides time into a series of
epochs at every participating process, where epochs are identified by
numbers.  The numbers of the epochs started by one particular process
increase monotonically (but they do not have to form a complete sequence).
Moreover, the primitive also assigns a \emph{leader} to every epoch, such
that any two correct processes in the same epoch receive the same leader.
The mechanism for processes to complain about the leader is the same as
for~$\Omega$.

More precisely, epoch change is defined as follows~\cite{CachinGR11}:

\begin{definition}[Byzantine epoch-change]
  \label{def:bec}
  A \emph{Byzantine epoch-change} ($\Psi$) primitive is defined with these
  events:
  \begin{description}\shortlist
  \item[Output event:] \eventp{$\Psi$-start-epoch}{$e, p$}: Indicates that
    the epoch with number~$e$ and leader~$p$ starts.
  \item[Input event:] \eventp{$\Psi$-complain}{$e, p$}: Expresses a
    complaint about the performance of leader process~$p$.
  \end{description}
  
  The primitive satisfies the following properties:
  \begin{description}\shortlist
  \item[Monotonicity:] If a correct process starts an epoch $(e, p)$
    and later starts an epoch $(e', p')$, then $e' > e$.
  \item[Consistency:] If a correct process starts an epoch $(e, p)$ and
    another correct process starts an epoch $(e', p')$ with $e = e'$,
    then $p = p'$.
  \item[Eventual succession:] Suppose more than $f$ correct processes have
    started an epoch~$(e, p)$ as their last epoch; when these processes all
    complain about~$p$, then every correct process eventually starts an
    epoch with a number higher than~$e$.
  \item[Coup resistance:] When a correct process that has most recently
    started some epoch~$(e_1, p_1)$ starts a new epoch~$(e_2, p_2)$, then
    at least one correct process has complained about leader~$p_1$ in
    epoch~$e_1$.
  \item[Eventual leadership:] There is a time after which every correct
    process has started some epoch and starts no further epoch, such that
    the last epoch started at every correct process is epoch~$(e, p)$ and
    process~$p$ is correct.
  \end{description}
\end{definition}

When an epoch-change abstraction is initialized, it is assumed that a
default epoch with number~0 and a leader~$p_0$ has been started at all
correct processes.  The value of $p_0$ is made available to all processes
implicitly.  All ``practical'' BFT systems in the eventual-synchrony model
starting from PBFT~\cite{caslis02} implicitly contain an implementation of
Byzantine epoch-change; this notion was described explicitly by Cachin et
al.~\cite[Chap.~5]{CachinGR11}.

\subsection{Hash functions and digital signatures}

We model cryptographic \emph{hash functions} and \emph{digital signature
  schemes} as ideal, deterministic functionalities implemented by a
distributed oracle.

A cryptographic \emph{hash function} maps a bit string of arbitrary length
to a short, unique representation.  The functionality provides only a
single operation \op{hash}; its invocation takes a bit string $x$ as
parameter and returns an integer $h$ with the response.  The implementation
maintains a list $L$ of all $x$ that have been queried so far.  When the
invocation contains $x\in L$, then \op{hash} responds with the index of $x$
in $L$; otherwise, \op{hash} appends $x$ to $L$ and returns its index.
This ideal implementation models only collision resistance but no other
properties of real hash functions.

The functionality of the \emph{digital signature scheme} provides two
operations, $\op{sign}_p$ and $\op{verify}_p$.  The invocation of
$\op{sign}_p$ specifies a process~$p$, takes a bit string~$m$ as input, and
returns a signature $\sigma \in \{0,1\}^*$ with the response.  Only $p$ may
invoke $\op{sign}_p$.  The operation $\op{verify}_p$ takes a putative
signature~$\sigma$ and a bit string~$m$ as parameters and returns a Boolean
value with the response.  Its implementation satisfies that
$\op{verify}_p(\sigma ,m)$ returns \true for any process~$p$ and $m \in
\{0,1\}^*$ if and only if $p$ has executed $\op{sign}_p(m)$ and obtained
$\sigma$ before; otherwise, $\op{verify}_p(\sigma ,m)$ returns \false.
Every process may invoke \op{verify}.  The signature scheme may be
implemented analogously to the hash function.

\section{Modular protocol }
\label{sec:modular}

In this section we discuss the \emph{modular} execution of replicated
\nondet programs.  Here the program is given as a black box, it cannot be
changed, and the BFT system cannot access its internal data structures.
Very informally speaking, if some processes arrive at a different output
during execution than ``most'' others, then the output of the disagreeing
processes is discarded.  Instead they should ``adopt'' the output of the
others, e.g., by asking them for the agreed-on state and response.  When
the outputs of ``too many'' processes disagree, the correct output may not
be clear; the operation is then ignored (or, as an optimization,
quarantined as \nondet) and the state rolled back.  In this modular
solution any application can be replicated without change; the application
developers may not even be aware of potential non-determinism.  On the
other hand, the modular protocol requires that most operations are
deterministic and produce almost always the same outputs at all processes;
it would not work for replicating probabilistic functions.

More precisely, a \emph{\nondet state machine} may output
different states and responses for the same operation, which are due to
probabilistic choices or other non-repeatable effects.  Hence we assume
that \op{execute} is a relation and not a deterministic function, that is,
repeated invocations of the same operation with the same input may yield
different outputs and responses.  This means that the standard approach of
state-machine replication based directly on atomic broadcast fails.

There are two ways for modular black-box replication of
\nondet applications in a BFT system:
\begin{description}
\item[Order-then-execute:] Applying the SMR principle directly, the
  operations are first ordered by atomic broadcast.  Whenever a process
  delivers an operation according to the total order, it executes the
  operation.  It does not output the response, however, before checking
  with enough others that they all arrive at the same outputs.  To this
  end, every process atomically broadcasts its outputs (or a hash of the
  outputs) and waits for receiving a given number (up to $n-f$) of outputs
  from distinct processes.  Then the process applies a fixed decision
  function to the atomically delivered outputs, and it determines the
  successor state and the response.

  This approach ensures consistency due to its conceptual simplicity but is
  not very efficient in typical situations, where atomic broadcast forms
  the bottleneck.  In particular, in atomic broadcast with external
  validity, a process can only participate in the ordering of the next
  operation when it has determined the outputs of the previous one.  This
  eliminates potential gains from pipelining and increases the overall
  latency.

\item[Execute-then-order:] Here the steps are inverted and the operations
  are executed \emph{speculatively} before the system commits their order.
  As in other practical protocols, this solution uses the heuristic
  assumption that there is a designated \emph{leader} which is usually
  correct.  Thus, every process sends its operations to the leader and the
  leader orders them.  It asks all processes to execute the operations
  speculatively in this order, the processes send (a hash of) their outputs
  to the leader, and the leader determines a unique output.  Note that this
  value is still speculative because the leader might fail or there might
  be multiple leaders acting concurrently.  The leader then tries to obtain
  a confirmation of its speculative order by atomically broadcasting the
  chosen output.  Once every process obtains this output from atomic
  broadcast, it commits the speculative state and outputs the response.

  In rare cases when a leader is replaced, some processes may have
  speculated wrongly and executed other operations than those determined
  through atomic broadcast.  Due to non-determinism in the execution a
  process may also have obtained a different speculative state and response
  than what the leader has obtained and broadcast.  This implies that the
  leader must either send the state (or state delta) and the response resulting from the
  operation though atomic broadcast, or that a process has a different way
  to recover the decided state from other processes.
\end{description}

In the following we describe Protocol~\emph{Sieve}, which adopts the
approach of \emph{execute-then-order} with speculative execution.

\subsection{Protocol \emph{Sieve}}
\label{subsec:sieve}

Protocol~\emph{Sieve} runs a Byzantine atomic broadcast with weak external
validity (abv) and uses a \emph{sieve-leader} to coordinate the execution
of \nondet operations.  The leader is elected through a Byzantine
epoch-change abstraction, as defined in
Section~\ref{subsec:leaderelection}, which outputs epoch/leader tuples with
monotonically increasing epoch numbers.  For the \emph{Sieve} protocol
these epochs are called \emph{configurations}, and \emph{Sieve} progresses
through a series of them, each with its own sieve-leader.

The processes send all operations to the service through the leader of the
current configuration, using an \str{invoke} message.  The current leader
then initiates that all processes execute the operation speculatively;
subsequently the processes agree on an output from the operation and
thereby \emph{commit} the operation.  As described here, \emph{Sieve}
executes one operation at a time, although it is possible to greatly
increase the throughput using the standard method of \emph{batching}
multiple operations together.

The leader sends an \str{execute} message to all processes with the
operation~$o$.  In turn, every process executes~$o$ \emph{speculatively} on
its current state~$s$, obtains the speculative next state~$t$ and the
speculative response~$r$, signs those values, and sends a hash and the
signature back to the leader in an \str{approve} message.

The leader receives $2f+1$ \str{approve} messages from distinct processes.
If the leader
observes at least $f+1$ approvals for the \emph{same} speculative output,
then it \emph{confirms} the operation and proceeds to committing and
executing it.  Otherwise, the leader concludes that the operation is
\emph{aborted} because of diverging outputs.  There must be $f+1$ equal
outputs for confirming~$o$, in order to ensure that every process will
eventually learn the correct output, see below.

The leader then \op{abv-broadcasts} an \str{order} message, containing the
operation, the speculative output $(t,r)$ for a confirmed operation or an
indication that it aborted, and for validation the set of \str{approve}
messages that justify the decision whether to confirm or abort.  During
atomic broadcast, the external validity check by the processes will verify
this justification.

\begin{algo*}
\vbox{
\small
\begin{tabbing}
xxxx\=xxxx\=xxxx\=xxxx\=xxxx\=xxxx\=MMMMMMMMMMMMMMMMMMM\=\kill
\textbf{State} \\

\> $\CI$: set of invoked operations at every process 
\` $B[p]$, for $p \in \CP$: buffer at sieve-leader \\
\> $\var{config}$: sieve-config number 
\` $\var{leader}$: sieve-leader, initially $p_0$ \\
\> $\var{next-epoch}$: announced sieve-config number, initially $\bot$ 
\` $\var{next-leader}$: announced sieve-leader, initially $\bot$ \\
\> $s$: current state, initially $s_0$ 
\` $\var{cur}$: current operation, initially \nil \\
\> $t$: speculative state, initially \nil 
\` $r$: speculative response, initially \nil \\
\\
\textbf{upon invocation} $\op{rsm-execute}(o)$ \textbf{do} \\
\> $\CI \becomes \CI \cup \{ o \}$ \\
\> send message \msg{invoke}{\var{config}, o} over 
   point-to-point link to \var{leader} \\
\\
\textbf{upon} receiving message \msg{invoke}{c, o} from $p$
   \textbf{such that} $B[p] = \nil$ 
   \textbf{and} $c = \var{config}$ 
   \textbf{and} $\var{leader} = \var{self}$ \textbf{do} \\
\> $B[p] \becomes o$
   \` // buffer only the latest operation from each process \\
\\
\textbf{upon} exists $p$ that $B[p] \neq \nil$
   \textbf{such that} $\var{cur} = \nil$
   \textbf{and} $\var{leader} = \var{self}$ \textbf{do} \\
\> $\var{cur} \becomes B[p]$ \\
\> send \msg{execute}{\var{config}, \var{cur}} over point-to-point links 
   to all processes \\
\\
\textbf{upon} receiving message \msg{execute}{c, o} from process~$p$
   \textbf{such that} $p = \var{leader}$ 
   \textbf{and} $c = \var{config}$
   \textbf{and} $t = \nil$ \textbf{do} \\
\> $(t, r) \becomes \op{execute}(s, o)$ \\
\> $\sigma \becomes \op{sign}_{\var{self}}(
      \str{speculate} \| \var{config} \| \op{hash}(t \| r) )$ \\
\> send message \msg{approve}{\var{config}, o, \op{hash}(t \| r), 
     \sigma} to \var{leader} \\
\\
\textbf{upon} receiving $2f+1$ messages
   \msg{approve}{c_p, o_p, h_p, \sigma_p}, each from a distinct process~$p$,
   \textbf{such that} \\
   \> \> $c_p = \var{config}$ 
   \textbf{and} $op_p = \var{cur}$
   \textbf{and} $\op{verify}_p(\sigma_p, \str{speculate} \| \var{config} \| h_p )$ 
   \textbf{and} $\var{leader} = \var{self}$ \textbf{do} \\
\> \textbf{if} there is a set \CE of $f+1$ received \str{approve} messages 
   whose $h_p$ value is equal to $\op{hash}(t\|r)$ \textbf{then} \\
\> \> \eventp{abv-broadcast}{\msg{order}{\str{confirm}, \var{config}, 
      \var{cur}, t, r, \CE}} \\
\> \textbf{else} \\
\> \> let \CU be the set of $2f+1$ received \str{approve} messages \\
\> \> \eventp{abv-broadcast}{\msg{order}{\str{abort}, \var{config}, 
      \var{cur}, \nil, \nil, \CU}} \\
\\
\textbf{upon} event \eventp{abv-deliver}{$p$, \msg{order}{\var{decision}, 
   c, o, t_c, r_c, \cdot}}
   \textbf{such that} $c = \var{config}$ \textbf{do} 
   \` // commit~$o$ \\
\> \textbf{if} $\var{leader} = \var{self}$ \textbf{then} \\
\> \> $B[p] \becomes \nil$ \\
\> \> $\var{cur} \becomes \nil$ \\
\> \textbf{if} $o \in \CI$ \textbf{then} \\
\> \> $\CI \becomes \CI \setminus \{o\}$ \\
\> \textbf{if} $\var{decision} = \str{confirm}$ \textbf{then} \\
\> \> $s \becomes t_c$ 
   \` // adopt the agreed-on state and response, needed 
      if $(t_c, r_c) \neq (t,r)$ \\
\> \> \eventp{rsm-output}{$o, s, r_c$}\\
\> $t \becomes \nil$ \\
\\
\textbf{upon} event \eventp{$\Psi$-start-epoch}{$e, p$} \textbf{do} \\
\> $(\var{next-epoch}, \var{next-leader}) \becomes (e, p)$ \\
\> \textbf{if} $p = \var{self} \land e > \var{config}$ \textbf{then} \\
\> \> \eventp{abv-broadcast}{\msg{new-sieve-config}{e, \var{self}}} \\
\\
\textbf{upon} event \eventp{abv-deliver}{$p$, \msg{new-sieve-config}{c, p}}
   \textbf{do} \\
\> $(\var{config}, \var{leader}) \becomes (c, p)$ \\
\> $t \becomes \nil$ \\
\\
\textbf{periodically do} \\
\> for every operation~$o \in \CI$, determine the age of~$o$ since it
  has been invoked and added to~\CI \\
\> \textbf{if} there are ``old'' operations in \CI \textbf{then} \\
\> \> \eventp{$\Psi$-\op{complain}}{\var{leader}}
\end{tabbing}
}
\caption{Protocol \emph{Sieve}: replicated state machine with
  non-deterministic operations}
\label{alg:sieve1}
\end{algo*}

\begin{algo*}
\vbox{
\small
\begin{tabbing}
xxxx\=xxxx\=xxxx\=xxxx\=xxxx\=xxxx\=\kill
\textbf{upon invocation} $V(m)$ 
  \textbf{do} \\
\> \textbf{if} $m = [\str{order}, \str{decision}, c, o, \CM]$ \textbf{then} \\
\> \> \textbf{if} \CM is a set of $f+1$ messages of the form
      \msg{approve}{c_p, o_p, h_p, \sigma_p} \textbf{such that} \\
\> \> \> $c_p = \var{config}$ \textbf{and} $o_p = o$ \textbf{and} 
         $\op{verify}_p(\sigma_p, \str{speculate}\|c_p\|h_p) = \true$ 
         \textbf{and} \\
\> \> \> all $h_p$ values in \CM are equal \textbf{then} \\
\> \> \textbf{return} \true \\
\> \textbf{else if} $m = [\str{order}, \str{abort}, c, o, \CM]$ \textbf{then} \\
\> \> \textbf{if} \CM is a set of $2f+1$ messages of the form
      \msg{approve}{c_p, o_p, h_p, \sigma_p} \textbf{such that} \\
\> \> \> $c_p = \var{config}$ \textbf{and} $o_p = o$ \textbf{and} 
         $\op{verify}_p(\sigma_p, \str{speculate}\|c_p\|h_p) = \true$ 
         \textbf{and} \\
\> \> \> no $f+1$ of the $h_p$ values in \CM are equal \textbf{then} \\
\> \> \textbf{return} \true \\
\> \textbf{else if} $m = [\str{new-sieve-config}, c, p]$ \textbf{then} \\
\> \> \textbf{if} $c \leq \var{next-epoch}$ \textbf{and} 
      $p = \var{next-leader}$ \textbf{then} \\
\> \> \> \textbf{return} \true \\
\> \textbf{return} \false
\end{tabbing}
}
\caption{Validation predicate $V()$ for Byzantine atomic broadcast
  used inside Algorithm~\emph{Sieve}}
\label{alg:sieve2}
\end{algo*}

As soon as an \str{order} message with operation~$o$ is
\emph{abv-delivered} to a process in \emph{Sieve}, $o$ is committed.  If
$o$ is confirmed, the process adopts the output decided by the leader.
Note this may differ from the speculative output computed by the process.
Protocol~\emph{Sieve} therefore includes the next state~$t$ and the
response~$r$ in the \str{order} message.  In practice, however, one might
not send~$t$, but state deltas, or even only the hash value of $t$ while
relying on a different way to recover the confirmed state.  Indeed, since
$f+1$ processes have approved any confirmed output, a process with a wrong
speculative output is sure to reach at least one of them for obtaining the
confirmed output later.

In case the leader \emph{abv-broadcasted} an \str{order} message with the
decision to abort the current operation because of the diverging
outputs (i.e., no $f+1$ identical hashes in $2f+1$ \str{approve} messages),
the process simply ignores the current request and speculative
state. As an optimization, processes may \emph{quarantine} the current
request and flag it as \nondet.

As described so far, the protocol is open to a denial-of-service attack by
multiple faulty processes disguising as sieve-leaders and executing
different operations.  Note that the epoch-change abstraction, in periods
of asynchrony, will not ensure that any two correct processes agree on the
leader, as some processes might skip configurations.  Therefore
\emph{Sieve} also orders the configuration and leader changes using
consensus (with the \emph{abv} primitive).

To this effect, whenever a process receives a \op{start-epoch} event with
itself as leader, the process \op{abv-broadcasts} a \str{new-sieve-config}
message, announcing itself as the leader.  The validation predicate for
broadcast verifies that the leader announcement concerns a configuration
that is not newer than the most recently started epoch at the validating
process, and that the process itself endorses the same next leader.  Every
process then starts the new configuration when the \str{new-sieve-config}
message is \op{abv-delivered}.  If there was a speculatively executed
operation, it is aborted and its output discarded.

The design of \emph{Sieve} prevents uncoordinated speculative request
execution, which may cause contention among requests from different
self-proclaimed leaders and can prevent liveness easily.  Naturally, a
faulty leader may also violate liveness, but this is not different from
other leader-based BFT protocols.

The details of Protocol~\emph{Sieve} are shown in
Algorithms~\ref{alg:sieve1}--\ref{alg:sieve2}.  The pseudocode assumes that
all point-to-point messages among correct processes are authenticated,
cannot be forged or altered, and respect FIFO order.  The invoked
operations are unique across all processes and \var{self} denotes the
identifier of the executing process.

\subsection{Correctness}

\begin{theorem}
  \label{thm:sieve}
  Protocol~\emph{Sieve} implements a replicated state machine allowing a
  \nondet functionality $\op{execute}()$, except that demonstrably \nondet
  operations may be filtered out and not executed.
\end{theorem}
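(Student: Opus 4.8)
The plan is to verify the three properties of the \textsf{rsm} abstraction from Definition~\ref{def:rsm}---Agreement, Correctness, and Termination---in turn, being careful about the caveat concerning filtered-out operations. The backbone of the argument is the observation that every commit event is triggered by \eventp{abv-deliver} of an \str{order} message, and likewise every configuration change is triggered by \eventp{abv-deliver} of a \str{new-sieve-config} message. So the Total order, Agreement, and No-duplication properties of the underlying \emph{abv} primitive immediately give all correct processes the \emph{same sequence} of \str{order} and \str{new-sieve-config} deliveries. The main work is to show that each correct process reacts to this common sequence in the same way---in particular, that the local predicates guarding the \texttt{upon} clauses (e.g.\ $c = \var{config}$) do not cause different correct processes to diverge on which \str{order} messages they act upon.

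First I would establish a configuration-consistency lemma: all correct processes pass through the same sequence of $(\var{config},\var{leader})$ pairs, and at any point where a correct process has $\var{config}=c$, the value of $c$ is determined by the number of \str{new-sieve-config} messages \emph{abv-delivered} so far. This follows because \var{config} is updated \emph{only} in the \eventp{abv-deliver} handler for \str{new-sieve-config}, all correct processes deliver the same such messages in the same order, and the validity predicate $V()$ (Algorithm~\ref{alg:sieve2}) only admits a \str{new-sieve-config} for $c \le \var{next-epoch}$ with the locally-endorsed \var{next-leader}; by weak external validity, at least one correct process endorsed it, and by Consistency of the epoch-change primitive $\Psi$ all correct processes that reach epoch $c$ see the same leader. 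Next, for \textbf{Agreement}, I would argue that a correct process commits operation $o$ in configuration $c$ exactly when it \emph{abv-delivers} the (unique) \str{order} message carrying $c$ and $o$ with $c$ matching its current \var{config}; since the \str{order}-delivery sequence is common and the \var{config} sequence is common, the subsequence of \emph{committed} operations and their outcomes (confirm with $(t_c,r_c)$, or abort) is identical at all correct processes, hence so is the sequence of \eventp{rsm-output} events.

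For \textbf{Correctness} I would show that whenever a confirmed \str{order} message is delivered, the state $t_c$ and response $r_c$ it carries satisfy $(t_c,r_c)=\op{execute}(s,o)$ for the committing process's current state $s$. The key sub-step: the justification set $\CE$ inside a confirmed \str{order} message contains $f+1$ \str{approve} messages with equal hash $\op{hash}(t_c\|r_c)$, and since $n>3f$, at least one of these is from a correct process, which computed $(t_c,r_c)=\op{execute}(s',o)$ on \emph{its} speculative state $s'$. One then argues inductively over the committed sequence that all correct processes hold the same state $s'=s$ before committing $o$: by the induction hypothesis they agree after the previous commit, the $s\becomes t_c$ assignment copies the agreed-on state from the \str{order} message (overriding any divergent speculative $t$), and aborts leave $s$ untouched everywhere. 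Collision resistance of \op{hash} is invoked to pass from equal hashes to equal $(t_c,r_c)$. The \emph{except}-clause of the theorem is exactly the abort case: an operation with no $f+1$ matching speculative outputs among $2f+1$ approvals is demonstrably \nondet (two correct processes produced different outputs), and it is legitimately dropped, so \op{execute} being a relation is respected. Finally, \textbf{Termination}: an invoked $o$ stays in $\CI$; the periodic \eventp{$\Psi$-complain} together with Eventual succession and Eventual leadership of $\Psi$ guarantees a stable correct leader is eventually reached; that leader buffers $o$, sends \str{execute}, collects $2f+1$ \str{approve} messages (possible because $n-f\ge 2f+1$ correct processes respond), \emph{abv-broadcasts} an \str{order} message which by Validity of \emph{abv} is eventually delivered everywhere, at which point $o$ is committed and removed from $\CI$, producing an output.

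The main obstacle I expect is the liveness argument, specifically ruling out an infinite sequence of configuration changes that perpetually preempts the speculatively-executing operation before any \str{order} message for it is delivered. This requires carefully combining the \emph{abv}-ordering of \str{new-sieve-config} messages with the stabilization guarantees of $\Psi$: after the system becomes synchronous and $\Psi$ has stabilized on a correct leader $p$ (Eventual leadership), no further \eventp{$\Psi$-start-epoch} fires at any correct process, hence no new \str{new-sieve-config} is ever \emph{abv-broadcast} by a correct process, and any stale ones from faulty processes fail the $V()$ check (their $c$ exceeds no correct process's \var{next-epoch}, or name the wrong leader); so eventually the \var{config} value is permanently fixed and the stable leader $p$ drives each buffered operation to a committed \str{order} delivery. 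Pinning down that ``eventually no correct process's validity predicate admits a further configuration change'' is the delicate point, since it couples the higher-level protocol's complaint behavior back into $\Psi$ via \eventp{$\Psi$-\op{complain}}; everything else is a fairly mechanical unrolling of the \emph{abv} and $\Psi$ properties.
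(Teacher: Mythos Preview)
Your proposal is correct and follows essentially the same approach as the paper's proof sketch: \emph{agreement} from the common \emph{abv}-delivery sequence plus deterministic reaction, \emph{correctness} from the fact that any $f+1$ matching \str{approve} hashes include at least one correct signer (and that an abort justification forces two correct processes to have diverged), and \emph{termination} from eventual \emph{abv}-delivery of the leader's \str{order} message. Your treatment is considerably more detailed than the paper's sketch---in particular your configuration-consistency lemma and your analysis of why leader churn eventually stops are points the paper leaves implicit---while the paper in turn makes one point slightly more explicit than you do for termination: for a \emph{deterministic} operation the $f+1$ correct approvers among any $2f+1$ necessarily share the leader's hash, so a valid \str{abort} justification is impossible and the operation is always confirmed (you obtain this only indirectly, as the contrapositive of your ``abort $\Rightarrow$ demonstrably \nondet'' argument).
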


\begin{proof}[Proof sketch]
  The \emph{agreement} condition of Definition~\ref{def:rsm} follows
  directly from the protocol and from the \emph{abv} primitive.  Every
  \event{rsm-output} event is immediately preceded by an \op{abv-delivered}
  \str{order} message, which is the same for all correct processes due to
  \emph{agreement} of \emph{abv}.  Since all correct processes react to it
  deterministically, their outputs are the same.

  For the \emph{correctness} property, note that the outputs $(s_i, r_i)$
  (state and response) resulting from an operation~$o$ must have been
  confirmed by the protocol and therefore the values were included in an
  \str{approve} message from at least one correct process.  This process
  computed the values such that they satisfy $(s_i, r_i) =
  \op{execute}(s_{i-1}, o)$ according to the protocol for handling an
  \str{execute}~message.  On the other hand, no correct process outputs
  anything for committed operations that were aborted, this is permitted by
  the exception in the theorem statement.  Moreover, only operations are
  filtered out for which distinct correct processes computed diverging
  outputs, as ensured by the sieve-leader when it determines whether the
  operation is confirmed or aborted.  In order to abort, no set of $f+1$
  processes must have computed the same outputs among the $2f+1$ processes
  sending the \str{approve}~messages.  Hence, at least two among every set
  of~$f+1$ correct processes arrived at diverging outputs.

  \emph{Termination} is only required for deterministic operations, they
  must terminate despite faulty processes that approve wrong outputs.  The
  protocol ensures this through the condition that at least~$f+1$ among the
  $2f+1$ \str{approve} messages received by the sieve-leader are equal.
  The faulty processes, of which there are at most~$f$, cannot cause an
  abort through this.  But every \str{order} message is eventually
  \op{abv-delivered} and every confirmed operation is eventually executed
  and generates an output.
\end{proof}

\subsection{Optimizations}
\label{subsec:sieveopt}

\paragraph{Rollback and state transfer.}  In Protocol~\emph{Sieve} every
process maintains a copy of the application state resulting from an
operation until the operation is committed.  Moreover, the confirmed state
and the response of an operation are included in \str{order} messages that
are \op{abv-broadcast}.  For practical applications though, this is often
too expensive, and another way for recovering the application state is
needed.  The solution is to roll back an operation and to transfer the
correct state from other processes.

We assume that there exists a \emph{rollback} primitive, ensuring that for
$(t, r) \gets \op{execute}(s, o)$, the output of $\op{rollback}(t, r, o)$
is always~$s$.  In the \str{order} messages of the protocol, the resulting
output state and response are replaced by their hashes for checking the
consistency among the processes.  Thus, when a process receives an
\str{order} message with a confirmed operation and hashes~$ht$ and $hr$ of
the output state and response, respectively, it checks whether the
speculative state and response satisfy $\op{hash}(t) = ht$ and
$\op{hash}(r) = hr$.  If so, it proceeds as in the protocol.

If the committed operation was aborted, or the values do not match in a
confirmed operation, the process rolls back the operation.  For a confirmed
operation, the process then invokes \emph{state transfer} and retrieves the
correct state~$t$ and response~$r$ that match the hashes from other
clients.  It will then set the state variable~$s$ to $t$ and output the
response~$r$.  Rollback helps to implement transfer state efficiently, by
sending incremental updates only.

For transferring the state, the process sends a \str{state-request} message
to all those processes who produced the \str{speculate}-signatures
contained in the $f+1$ \str{approve} messages, which the process receives
together with a committed and confirmed operation.  Since at most $f$ of
them may fail to respond, the process is guaranteed to receive the correct
state.

State transfer is also initiated when a new configuration starts through an
\op{abv-delivered} \str{new-sieve-config} message, but the process has
already speculatively executed an operation in the last configuration
without committing it (this can be recognized by $t \neq \nil$).  As in the
above use of state transfer, the operation must terminate before the
process becomes ready to execute further operations from \str{execute}
messages.

\paragraph{Synchronization with PBFT-based atomic broadcast.}

When the well-known \emph{PBFT protocol}~\cite{caslis02} is used to
implement \op{abv-broadcast}, two further optimizations are possible,
mainly because PBFT also relies on a leader and already includes Byzantine
epoch-change.  Hence assume that every process runs PBFT according to
Castro and Liskov~\cite[Sec.~4]{caslis02}.

First, let an epoch-change event of $\Psi$ occur at every view change of
PBFT.  More precisely, whenever a process has received a correct
PBFT-\textsc{new-view} message for PBFT-view number~$v$ and new primary
process~$p$, and when the matching \textsc{view-change} messages have
arrived, then the process triggers a \eventp{$\Psi$-start-epoch}{$v, p$}
event at \emph{Sieve}.  The process subsequently starts executing requests
in the new view.  Moreover, complaints from \emph{Sieve} are handled in the
same way as when a backup in PBFT \emph{suspects} the current primary to be
faulty, namely, it initiates a view change by sending a \str{view-change}
message.

The view change mechanism of PBFT ensures all properties expected
from~$\Psi$, as follows.  The \emph{monotonicity} and \emph{consistency}
properties of Byzantine epoch-change follow directly from the calculation
of strictly increasing view numbers in PBFT and the deterministic
derivation of the PBFT-primary from the view.  The \emph{eventual
  leadership} condition follows from the underlying timing assumption,
which essentially means that timeouts are increased until, eventually,
every correct process is able to communicate with the leader, the leader is
correct, and no further epoch-changes occur.

The second optimization concerns the \str{new-sieve-config} message.
According to \emph{Sieve} it is \op{abv-broadcast} whenever a new
sieve-leader is elected, by that leader.  As the leader is directly mapped
to PBFT's primary here, it is now the primary who sends this message as a
PBFT request.  Note that this request might not be delivered when the
primary fails, but it will be delivered by the other processes according to
the properties of \op{abv-broadcast}, as required by \emph{Sieve}.  Hence,
the new sieve-configuration and sieve-leader are assigned either by all
correct processes or by none of them.

With these two specializations for PBFT, \emph{Sieve} incurs the additional
cost of the \str{execute}/\str{approve} messages in the request flow, and
one \str{new-sieve-config} following every view-change of PBFT.  But
determining the sieve-leader and implementing $\Psi$ do not lead to any
additional messages.

\subsection{Discussion}

Non-deterministic operations have not often been discussed in the context
of BFT systems.  The literature commonly assumes that deterministic
behavior can be imposed on an application or postulates to change the
application code for isolating non-determinism.  In practice, however, it
is often not possible.

Liskov~\cite{liskov10} sketches an approach to deal with non-determinism in
PBFT which is similar to \emph{Sieve} in the sense that it treats the
application code modularly and uses execute-then-order.  This proposal is
restricted to the particular structure of PBFT, however, and does not
consider the notion of external validity for \emph{abv} broadcast.

For applications on multi-core servers, the \emph{Eve}
system~\cite{kwqcad12} also executes operation groups speculatively across
processes and detects diverging states during a subsequent verification
stage.  In case of divergence, the processes must roll back the operations.
The approach taken in Eve resembles that of \emph{Sieve}, but there
are notable differences. Specifically, the primary application of Eve continues to assume deterministic
operations, and non-determinism may only result from concurrency during
parallel execution of requests.  Furthermore, this work uses a particular
agreement protocol based on PBFT and not a generic \emph{abv} broadcast
primitive.

It should be noted that \emph{Sieve} not only works with Byzantine atomic
broadcast in the model of eventual synchrony, but can equally well be run
over randomized Byzantine consensus~\cite{ckps01,mxcss16}.

\section{Master-slave protocol}
\label{sec:master}

By adopting the \emph{master-slave} model one can support a broader range
of \nondet application behavior compared to the modular protocol.
This design generally requires source-code access and modifications to the
program implementing the functionality.  In a master-slave protocol for
\nondet execution, one process is designated as \emph{master}.
The master executes every operation first and records all \nondet
choices.  All other processes act as \emph{slaves} and follow the same
choices.  To cope with a potentially Byzantine master, the slaves must be
given means to verify that the choices made by the master are plausible.
The master-slave solution presented here follows \emph{primary-backup
  replication}~\cite{bmst93}, which is well-known to handle
\nondet operations.  For instance, if the application accesses a
pseudorandom number generator, only the master obtains the random bits from
the generator and the slaves adopt the bits chosen by the master.  This
protocol does not work for functionalities involving cryptography, however,
where master-slave replication typically falls short of achieving the
desired goals.  Instead a cryptographically secure protocol should be used;
they are the subject of Section~\ref{sec:secure}.

\subsection{Non-deterministic execution with evidence}
\label{subsec:nondet}

As introduced in Section~\ref{sec:modular}, the \op{execute} operation of a
\nondet state machine is a relation.  Different output values are
possible and represent acceptable outcomes.  We augment the output of an
operation execution by adding \emph{evidence} for justifying the resulting
state and response.  The slave processes may then \emph{replay} the choices
of the master and accept its output.

More formally, we now extend \op{execute} to \op{nondet-execute} as
follows:
\[
  \op{nondet-execute}(s, o) \ \to \ (s', r, \rho).
\]
Its parameters $s$, $o$, $s'$, and $r$ are the same as for \op{execute};
additionally, the function also outputs \emph{evidence}~$\rho$.  Evidence
enables the slave processes to execute the operation by themselves and
obtain the same output as the master, or perhaps only to validate the
output generated by another execution.  For this task there is a function
\[
  \op{verify-execution}(s, o, s', r, \rho) \ \to \ \{\false, \true\}
\]
that outputs \true if and only if the set of possible outputs from
$\op{nondet-execute}(s, o)$ contains $(s', r, \rho)$.  For completeness we
require that for every $s$ and $o$, when $(s', r, \rho) \gets
\op{nondet-execute}(s, o)$, it always holds $\op{verify-execute}(s, o, s',
r, e)$ $=$ $\true$.

As a basic verification method, a slave could rerun the computation of the
master.  Extensions to use cryptographic verifiable
computation~\cite{walblu15} are possible.  Note that we consider randomized
algorithms to be a special case of \nondet ones.  The evidence for
executing a randomized algorithm might simply consist of the random coin
flips made during the execution.

\subsection{Replication protocol}
\label{subsec:master-protocol}

Implementing a replicated state machine with \nondet operations
using master-slave replication does not require an extra round of messages
to be exchanged, as in Protocol~\emph{Sieve}.  It suffices that the master
is chosen by a Byzantine epoch-change abstraction and that the master
broadcasts every operation together with the corresponding evidence.

More precisely, the processes operate on top of an underlying broadcast
primitive \emph{abv} and a Byzantine epoch-change abstraction~$\Psi$.
Whenever a process receives a \op{start-epoch} event with itself as leader
from~$\Psi$, the process considers itself to be the master for the epoch
and \op{abv-broadcasts} a message that announces itself as the master for
the epoch.  The epochs evolve analogously to the configurations in
\emph{Sieve}, with the same mechanism to approve changes of the master in
the validation predicate of atomic broadcast.  Similarly, non-master
processes send their operations to the master of the current epoch for
ordering and execution.

For every invoked operation~$o$, the master computes $(s', r, \rho) \gets
\op{nondet-execute}(s, o)$ and \op{abv-broadcasts} an \str{order} message
containing the current epoch~$c$ and parameters $o$, $s'$, $r$, and~$\rho$.
The validation predicate of atomic broadcast for \str{order} messages
verifies that the message concerns the current epoch and that
$\op{verify-execution}(s, o, s', r, \rho) = \true$ using the current state~$s$
of the process.  Once an \str{order} message is \op{abv-delivered}, a
process adopts the response and output state from the message as its own.

As discussed in the first optimization for \emph{Sieve}
(Section~\ref{subsec:sieveopt}), the output state~$s'$ and response~$r$ do
not always have to be included in the \str{order} messages.  In the
master-slave model, they can be replaced by hashes only for those
operations where the evidence~$\rho$ contains sufficient data for a process
to compute the same $s'$ and $r$ values as the master.  This holds, for
example, when all \nondet choices of an operation are contained in~$\rho$.

Should the master \op{abv-broadcast} an operation with evidence that does
not execute properly, i.e., $\op{verify-execution}(s, o, s', r, \rho) =
\false$, the atomic broadcast primitive ensures that it is not
\op{abv-delivered} through the external validity property.  As in
\emph{Sieve}, every process periodically checks if the operations that it
has invoked have been executed and complains against the current master
using~$\Psi$.  This ensures that misbehaving masters are eventually
replaced.

\subsection{Discussion}

The master-slave protocol is inspired by primary-backup
replication~\cite{bmst93}, and for the concrete scenario of a BFT system,
it was first described by Castro, Rodrigues, and Liskov in
BASE~\cite{caroli03}.  The protocol of BASE addresses only the particular
context of PBFT, however, and not a generic atomic broadcast primitive.

As mentioned before, the master-slave protocol requires changes to the
application for extracting the evidence that will convince the slave
processes that choices made by the master are valid.  This works well in
practice for applications in which only a few, known steps can lead to
divergence.  For example, operations reading inputs from the local system,
accessing platform-specific environment data, or generating randomness can
be replicated whenever those functions are provided by programming
libraries.  Master-slave replication may only be employed when the
application developer is aware of the causes of non-determinism; for
example, a multi-threaded application influenced by a \nondet
scheduler could not be replicated unless the developer can also control the
scheduling (e.g.,~\cite{kscsd10}).

\section{Cryptographically secure protocols}
\label{sec:secure}

Security functions implemented with cryptography are more important today
than ever.  Replicating an application
that involves a cryptographic secret, however, requires a careful
consideration of the attack model.  If the BFT system should tolerate that
$f$ processes become faulty in arbitrary ways, it must be assumed that
their secrets leak to the adversary against whom the cryptographic scheme
is employed.  

Service-level secret keys must be protected and should never leak to
an individual process.  Two solutions have been explored to address this
issue.  One could delegate this responsibility to a third party, such as a
centralized service or a secure hardware module at every process.  However,
this contradicts the main motivation behind replication: to eliminate
central control points.  Alternatively one may use \emph{distributed
  cryptography}~\cite{desmed94}, share the keys among the processes so that
no coalition of up to $f$ among them learns anything, and perform the
cryptographic operations under distributed control.  This model was
pioneered by Reiter and Birman~\cite{reibir94} and exploited, for instance,
by SINTRA~\cite{cachin01,cacpor02} or COCA~\cite{zhscre02a}.

In this section we discuss two methods for integrating non-deterministic
cryptographic operations in a BFT system.  The first scheme is a novel
protocol in the context of BFT systems, called \emph{Mastercrypt},
and uses verifiable random functions to generate
pseudorandom bits.  This randomness is unpredictable and cannot be biased by a
Byzantine process.  The second scheme is the well-known technique of
distributed cryptography, as discussed above, which addresses a broad range
of cryptographic applications.  Both schemes adopt the master-slave
replication protocol from the previous section.

\subsection{Randomness from verifiable random functions}
\label{subsec:vrf}

A \emph{verifiable random function (VRF)}~\cite{mirava99} resembles a
pseudorandom function but additionally permits anyone to verify
non-interactively that the choice of random bits occurred correctly.  The
function therefore guarantees correctness for its output without disclosing
anything about the secret seed, in a way similar to non-interactive
zero-knowledge proofs of correctness.

More precisely, the process owning the VRF chooses a secret seed~$sk$ and
publishes a public verification key~$vk$.  Then the function family
$G_{sk}: \{0,1\}^\lambda \to \{0,1\}^\mu$ and algorithms $P_{sk}$ and
$V_{pk}$ are a VRF whenever three properties hold:
\begin{description}\shortlist
\item[Correctness:] $y \becomes G_{sk}(x)$ can be computed efficiently from
  $sk$ and for every~$x$ one can also (with the help of~$sk$) efficiently
  generate a proof $\pi \becomes P_{sk}(x)$ such that $V_{pk}(x,y,\pi) =
  \true$.
\item[Uniqueness:] For every input~$x$ there is a unique~$y$ that satisfies
  $V_{pk}(x,y,\pi)$, i.e., it is impossible to find $y_0$ and $y_1 \neq
  y_0$ and $\pi_0$ and $\pi_1$ such that $V_{pk}(x,y_0,\pi_0) =
  V_{pk}(x,y_1,\pi_1) = \true$.
\item[Pseudorandomness:] From knowing $vk$ alone and sampling values from
  $V_{sk}$ and $P_{sk}$, no polynomial-time adversary can distinguish the
  output of $G_{sk}(x)$ from a uniformly random $\mu$-bit string, unless
  the adversary calls the owner to evaluate~$V_{sk}$ or $P_{sk}$ on~$x$.
\end{description}
Thus, a VRF generates a value for every input~$x$ which is unpredictable
and pseudorandom to anyone \emph{not} knowing~$sk$.  As $G_{sk}(x)$ is
unique for a given~$x$, even an adversarially chosen key preserves the
pseudorandomness of $G$'s outputs towards other processes.

Efficient implementations of VRFs have not been easy to find, but the
literature nowadays contains a number of reasonable constructions under
broadly accepted hardness assumptions~\cite{lysyan02,jager15}.  In
practice, when adopting the random-oracle model, VRFs can immediately be
obtained from unique signatures such as ordinary RSA
signatures~\cite{lysyan02}.

\paragraph{Replication with cryptographic randomness from a VRF.}
With master-slave replication, cryptographically strong
randomness secure against faulty non-leader processes can be obtained from
a VRF as follows.  Initially every process generates a VRF-seed and a
verification key. Then it passes the verification key to a trusted entity,
which distributes the $n$ verification keys to all processes consistently,
ensuring that all correct processes use the same list of verification keys.
At every place where the application needs to generate (pseudo-)randomness,
the VRF is used by the master to produce the random bits and all processes
verify that the bits are unique.

In more detail, \emph{Mastercrypt} works as follows.
The master computes all random choices while executing an
operation~$o$ as $r \becomes G_{sk}(\var{tag})$, where \var{tag} denotes a
unique identifier for the instance and operation.  This tag must never
reused by the protocol and should not be under the control of the master.
The master supplies $\pi \becomes P_{sk}(\var{tag})$ to the other processes
as evidence for the choice of~$r$.  During the verification step in
$\op{verify-execution}()$ every process now validates that
$V_{pk}(\var{tag},r,\pi) = \true$.  When executing the operation, every
process uses the same randomness~$r$.

The pseudorandomness property of the VRF ensures that no process apart from
the master (or anyone knowing its secret seed) can distinguish $r$ from
truly random bits.  This depends crucially on the condition that \var{tag}
is used only once as input to the VRF.  Hence this solution yields a
deterministic pseudorandom output that achieves the desired
unpredictability and randomness in many cases, especially against entities
that are not part of the BFT system.  Note that simply handing over the
seed of a cryptographic pseudorandom generator to all processes and
treating the generator as part of a deterministic application would be
predictable for the slave processes and not pseudorandom.

Of course, if the master is faulty then it can predict the value of~$r$,
leak it to other processes, and influence the protocol accordingly.  The
protocol should leave as little as possible choice to the master for
influencing the value of~\var{tag}.  It could be derived from an identifier
of the protocol or BFT system ``instance,'' perhaps including the
identities of all processes, followed by a uniquely determined binary
representation of the operation's sequence number.  If one assumes all
operations are represented by unique bit strings, a hash of the operation
itself could also serve as identifier.

\subsection{Using distributed cryptography}
\label{subsec:distcrypto}

\emph{Distributed cryptography} or, more precisely, \emph{threshold
  cryptography}~\cite{desmed94} distributes the power of a cryptosystem
among a group of $n$ processes such that it tolerates $f$ faulty ones,
which may leak their secrets, fail to participate in protocols, or even act
adversarially against the other processes.  Threshold cryptosystems extend
cryptographic secret sharing schemes, which permit the process group to
maintain a secret such that $f$ or fewer of them have no information about
it, but any set of \emph{more} than $f$ can reconstruct it.

A \emph{threshold public-key cryptosystem (T-PKCS)}, for example, is a
public-key cryptosystem with distributed control over the decryption
operation.  There is a single public key for encryption, but each process
holds a \emph{key share} for decryption.  When a ciphertext is to be
decrypted, every process computes a decryption share from the ciphertext
and its key share.  From any $f+1$ of these decryption shares, the
plaintext can be recovered.  Usually the decryption shares are accompanied
by zero-knowledge proofs to make the scheme robust.  This models a
non-interactive T-PKCS, which is practical because it only needs one round
of point-to-point messages for exchanging the decryption shares; other
T-PKCSs require interaction among the processes for computing shares.  The
public key and the key shares are generated either by a trusted entity
before the protocol starts or again in a distributed way, tolerating faulty
processes that may try to disrupt the key-generation
protocol~\cite{gjkr07}.

A state-of-the-art T-PKCS is secure against \emph{adaptive
  chosen-ciphertext attacks}~\cite{shogen98}, ensuring that an adversary
cannot obtain any meaningful information from a ciphertext unless at least
one correct process has computed a decryption share.  With a T-PKCS the BFT
system can receive operations in encrypted form, order them first without
knowing their content, and only decrypt and execute them after they have
been ordered.  This approach defends against violations of the causal order
among operations~\cite{reibir94,ckps01}.

A \emph{threshold signature scheme} works analogously and can be used, for
instance, to implement a secure name service or a certification
authority~\cite{cachin01,zhscre02a,cacsam04}.  Practical non-interactive
threshold signature schemes are well-known~\cite{shoup00a}.  To generate
cryptographically strong and unpredictable pseudorandom bits,
\emph{threshold coin-tossing schemes} have also been
constructed~\cite{cakush05}.  They do not suffer from the limitation of the
VRF construction in the previous section and ensure that no single process
can predict the randomness until at least one correct process has agreed to
start the protocol.

\paragraph{Replication with threshold cryptosystems.}
Threshold cryptosystems have been used in BFT replication starting with the
work Reiter and Birman~\cite{reibir94}.  Subsequently
SINTRA~\cite{cacpor02} and other systems exploited it as well with robust,
asynchronous protocols.

For integrating a threshold cryptosystems with a BFT system, no particular
assumptions are needed about the structure of the atomic broadcast or even
the existence of a leader.  The distributed scheme can simply be inserted
into the code that executes operations and directly replaces the calls to
the cryptosystems.

To be more precise, suppose that $\op{execute}(s, o)$ invokes a call to one
of the three cryptosystem functions discussed before (that is, public-key
decryption, issuing a digital signature, or generating random bits).  The
process now invokes the threshold algorithm to generate a corresponding
share.  If the threshold cryptosystem is non-interactive, the process sends
this share to all others over the point-to-point links.  Then the process
waits for receiving $f+1$ shares and assembles them to the result of the
cryptographic operation.  With interactive threshold schemes, the processes
invoke the corresponding protocol as a subroutine.  Ideally the threshold
cryptosystem supports the same cryptographic signature structure or
ciphertext format as the standardized schemes; then the rest of the service
(i.e., code of the clients) can remain the same as with a centralized
service.  This holds for RSA signatures, for instance~\cite{cacsam04}.

\section{Conclusion}

This paper has introduced a distinction between three models for dealing
with \nondet operations in BFT replication: \emph{modular} where the
application is a black box; \emph{master-slave} that needs internal access
to the application; and \emph{cryptographically secure} handling of \nondet
randomness generation.  In the past, dedicated BFT replication systems have
often argued for using the master-slave model, but we have learned in the
context of blockchain applications that changes of the code and
understanding an application's logic can be difficult.  Hence, our novel
Protocol~\emph{Sieve} provides a modular solution that does not require any
manual intervention.  For a BFT-based blockchain platform, \emph{Sieve} can
simply be run without incurring large overhead as a defense against
non-determinism, which may be hidden in smart contracts.

\section*{Acknowledgments}

We thank our colleagues and the members of the IBM Blockchain development
team for interesting discussions and valuable comments, in particular Elli
Androulaki, Konstantinos Christidis, Angelo De Caro, Chet Murthy, Binh
Nguyen, and Michael Osborne.

This work was supported in part by the European Union's Horizon 2020
Framework Programme under grant agreement number~643964 (SUPERCLOUD) and in
part by the Swiss State Secretariat for Education, Research and Innovation
(SERI) under contract number~15.0091.


\end{document}